\documentclass[conference]{IEEEtran}
\IEEEoverridecommandlockouts

\usepackage[T1]{fontenc}
\usepackage{graphicx}
\usepackage{xcolor}
\usepackage{amsmath,amssymb,amsthm}
\usepackage{algorithm}
\usepackage{algorithmic}
\usepackage{booktabs}
\usepackage{multirow}
\usepackage{pifont}
\usepackage{soul}
\usepackage{balance}
\usepackage{stfloats}
\usepackage{xspace}
\usepackage[hidelinks]{hyperref}

\theoremstyle{plain}
\newtheorem{proposition}{Proposition}
\newtheorem{assumption}{Assumption}
\theoremstyle{remark}
\newtheorem*{remark}{Remark}

\providecommand{\cmark}{\ding{51}}
\providecommand{\xmark}{\ding{55}}

\newcommand{\myparatight}[1]{\smallskip\noindent{\bf {#1}:}~}
\newcommand{\sys}{\textsc{Centile}\xspace}

\makeatletter
\def\tablename{Table}
\long\def\@makecaption#1#2{%
\ifx\@captype\@IEEEtablestring%
\footnotesize\bgroup\par\centering\@IEEEtabletopskipstrut{\normalfont\footnotesize {#1.}\nobreakspace #2}\par\addvspace{0.5\baselineskip}\egroup%
\@IEEEtablecaptionsepspace
\else
\@IEEEfigurecaptionsepspace
\setbox\@tempboxa\hbox{\normalfont\footnotesize {#1.}\nobreakspace #2}%
\ifdim \wd\@tempboxa >\hsize%
\setbox\@tempboxa\hbox{\normalfont\footnotesize {#1.}\nobreakspace}%
\parbox[t]{\hsize}{\normalfont\footnotesize \noindent\unhbox\@tempboxa#2}%
\else%
\hbox to\hsize{\normalfont\footnotesize\hfil\box\@tempboxa\hfil}%
\fi\fi}
\makeatother

\begin{document}

\title{\sys: A Telemetry Foundation Model Evaluated by the Decisions It Drives}

\author{
	\IEEEauthorblockN{
		Zifan Zhang\textsuperscript{*},
		Zhichao Hou\textsuperscript{*},
		Tingxiang Ji,
		Yuchen Liu
	}
	\IEEEauthorblockA{
		Department of Computer Science, North Carolina State University, Raleigh, NC 27695, USA
	}
	\thanks{\textsuperscript{*}These authors contributed equally to this work.}
	
}

\maketitle

\begin{abstract}
Modern computing and networking infrastructure emits telemetry continuously, yet operators convert it into decisions with a separate predictor per task, entity, and horizon.
One generative model, pretrained once over an operator's own event streams, could replace this fleet, an approach that already scales to high-cardinality streams in recommendation systems.
However, point-forecast error on operational telemetry saturates near simple last-value baselines, so lower error alone need not improve the decisions it feeds.
To close this gap, we present \sys, a generative foundation model for network and systems telemetry, evaluated by replaying the decisions its calibrated conditional quantiles drive.
\sys treats heterogeneous telemetry as event-driven, irregularly timed entity streams and serves flexible forecast horizons in a single pass, requiring no future timestamps.
To our knowledge, \sys is the first pretrained telemetry model to improve both HPC scheduling and network provisioning decisions under replay, its runtime estimator transferring zero-shot across months and its pretrained weights across domains from hours of target data.
Extensive experiments on HPC job logs and network traffic confirm that \sys lowers the mean bounded slowdown of backfilling by up to approximately $77\%$ over deployed user estimates and roughly halves the deployed rule's violation rate.
Our code is available at \url{https://github.com/ZzZTripleZzZ/all-in-one}.
\end{abstract}

\section{Introduction}

Every consequential action in a production fleet begins as a forecast.
An HPC scheduler reserves thousands of nodes against a predicted job runtime, and a capacity planner locks in next-hour bandwidth at a predicted traffic percentile.
When the forecast is wrong, the cost is concrete, as idle reservations waste node-hours, under-provisioned links violate service agreements, and tail latency climbs at fleet scale.
The data for these forecasts is abundant, since every layer of modern computing and networking infrastructure emits telemetry continuously, from per-task lifecycle events and per-virtual-machine utilization in datacenters, to per-node power in supercomputers, to per-flow traffic volumes in networks spanning datacenter fabrics and national backbones~\cite{azure,m100,borg,borgng}.
These events arrive irregularly, on each entity's own clock rather than on a shared sampling grid.

The prevailing practice converts this telemetry into decisions one predictor at a time, with a separate model trained for each task, each entity type, and even each prediction horizon.
This is precisely the regime that foundation models displaced in language, vision, and recommendation, yet the telemetry prediction that underpins network performance analysis still lacks one.
We therefore ask whether systems telemetry can have its own foundation model, a single pretrained model that ingests any entity's history and serves any operational decision that consumes it.
Three requirements make this question concrete: the model must 1) serve any forecast horizon without retraining; 2) consume irregularly timed events as they arrive; and 3) carry what it has learned across domains, so that telemetry from one infrastructure can bootstrap decisions on another.
One pretrained model would retire this fleet, and a newly deployed system with only hours of history could still make calibrated decisions on day one.
Building the first such model is the goal of this paper.

\myparatight{Motivation and Challenges} Reaching this goal is difficult for reasons specific to networking systems.
First, the fleet itself is a recurring operational cost.
Each predictor carries its own training pipeline, retraining cadence, storage and inference footprint, and monitoring, an operational burden documented to dominate the lifecycle cost of production learning systems~\cite{paleyes}, and every new workload, entity type, or horizon adds another model and pipeline.
Second, the data itself is difficult to capture with a single model, as telemetry entities number in the millions and are heterogeneous, spanning users, jobs, machines, and flows.
Moreover, events do not arrive on a uniform clock, a substantial part of their predictive signal is carried by \textit{when} they happen, and operators query these streams at arbitrary input and output lengths, so a model that fixes its window and horizon must be duplicated per use.
Lastly, and most fundamentally, today's models are judged by a metric that is insensitive to what matters.
\textit{Actions speak louder than words}: a telemetry model exists to drive schedules and provisioning plans, yet it is ranked by point-forecast error, a number that saturates near last-value baselines on operational traces and can stay flat while the decisions it feeds diverge~\cite{dfl}.

\myparatight{Limitations and Opportunities} The foundation models proposed for networking so far each cover one domain, one modality, and usually one task.
Masked packet and flow encoders reconstruct held-out telemetry yet are never evaluated against a scheduler or a provisioner~\cite{etbert,netfound}, wireless models remain confined to their own air interface~\cite{mobigpt,lwm}, and language-model adaptations inherit billion-parameter inference costs that no per-decision millisecond budget can absorb~\cite{netllm}.
General-purpose time-series foundation models~\cite{chronos,timesfm} come closest, delivering one model for any horizon, but they pretrain on generic corpora rather than system telemetry, assume a uniform sampling grid that discards event timing, and serve fixed quantile grids whose most conservative level falls short of what provisioning requires.
None is at once \textit{generative}, \textit{pretrained on system telemetry itself}, and \textit{evaluated on the decisions operators actually run}.
Our key insight is that operational decisions consume \textit{quantiles} rather than point forecasts, as a backfilling scheduler reserves with a walltime estimate and a provisioner sets capacity at a percentile.
A generative model that learns the calibrated conditional distribution of telemetry is therefore exactly the artifact both decisions need, even where its point error appears unremarkable.

\myparatight{Objectives and Contributions} To this end, we present \sys\footnote{Our code is available at \url{https://github.com/ZzZTripleZzZ/all-in-one}.}, a generative foundation model for network and systems telemetry that is evaluated throughout at the level of the operational decisions it drives.
\sys treats heterogeneous telemetry as event-driven entity streams and embeds each event's value, resource covariates, and inter-event gap.
From one shared design, the same architecture, objective, and training procedure at every deployment, it pretrains one model per domain, built around three telemetry-specific choices: intensity-preserving attention over irregularly timed event streams, a heavy-tailed mixture output, and direct multi-horizon decoding.
\sys exposes calibrated conditional quantiles that feed each decision, serving every horizon in a single pass rather than by autoregressive rollout and needing no future timestamps at submit time, as sketched in Fig.~\ref{fig:arch}.

\begin{figure*}[!t]
\centering
\vspace{-.1in}
\includegraphics[width=0.95\textwidth]{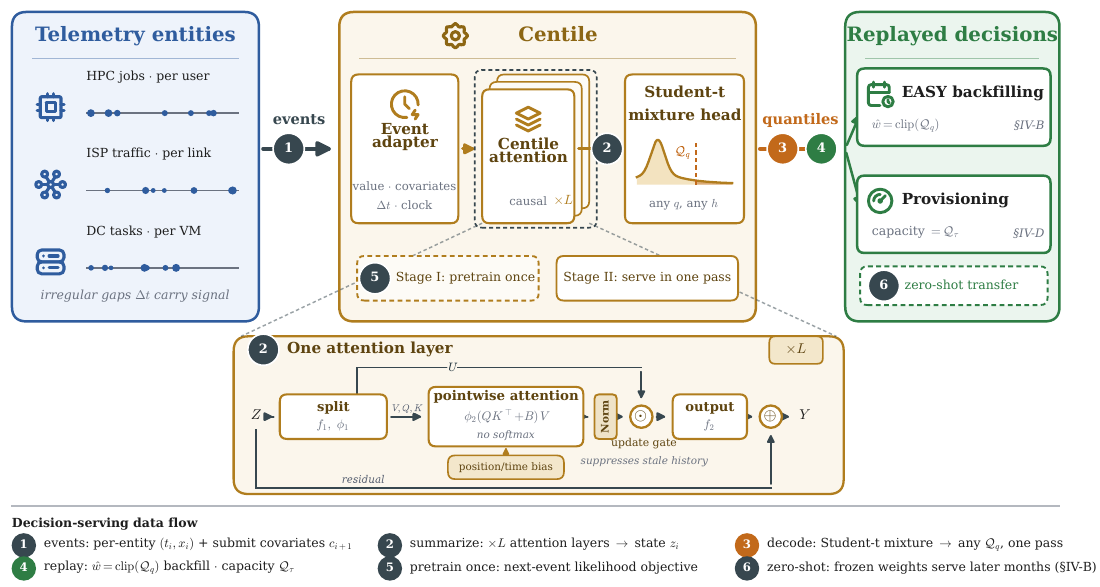}
\caption{\sys overview. Events from telemetry entities flow through the adapter, the pretrained attention layers, and the Student-t mixture head into calibrated quantiles that drive the replayed decisions. The inset expands one of the $L$ attention layers.}
\vspace{-.1in}
\label{fig:arch}
\end{figure*}
We instantiate this on two operational tasks that no prior telemetry model has jointly addressed: estimating job walltimes, the runtimes that Extensible Argonne Scheduling sYstem (EASY) backfilling schedules against, on production HPC job logs, and provisioning next-hour capacity on national ISP traffic.
A third domain, cloud virtual-machine telemetry, joins the pretraining traces for cross-domain transfer.
To our knowledge, this is the first work to show that one model design, pretrained on unlabeled telemetry, improves both scheduling and provisioning decisions under replay.
The estimator transfers zero-shot across months, and the pretrained weights carry over from traffic prediction to HPC scheduling.
Our main contributions are summarized as follows.

\begin{list}{\labelitemi}{\leftmargin=1em \itemindent=-0.08em \itemsep=.05em}
\item We recast network and systems telemetry prediction as one generative model over event-driven entity streams, and introduce a decision-replay protocol that scores a telemetry model by the scheduling and provisioning decisions its forecasts drive rather than by point-forecast error.
This protocol exposes differences that point-forecast metrics conceal.
\item Driving EASY backfilling from \sys quantiles lowers mean bounded slowdown over deployed user estimates by up to approximately $77\%$, and its estimator is the best walltime source on every evaluated month, including the unseen months served by the April-pretrained model.
\item Without any change to the design, \sys serves next-hour capacity provisioning on ISP traffic, roughly halving the violation rate of the deployed rule at comparable overprovisioning.
Its $1.3$M-parameter runtime estimator serves scheduling decisions in milliseconds on commodity CPUs.
\item Lastly, we conduct comprehensive decision-replay evaluations across a multi-month HPC job trace and a national ISP trace, spanning cross-month transfer, cross-domain transfer, pretraining scale, ablations, and deployment cost.
\end{list}

\section{Background and Positioning}
\label{sec:prelim}

\myparatight{Telemetry as entity event streams}
We model all telemetry through one abstraction.
A source exposes a set of \emph{entities} $\mathcal{E}$, where an entity may be a user, a compute node, a virtual machine, or a monitored link.
Each entity $e\in\mathcal{E}$ emits an ordered stream of timestamped \emph{events}
\begin{equation}
S_e=\big((t_1,x_1),(t_2,x_2),\dots\big),\qquad t_1<t_2<\cdots,
\label{eq:stream}
\end{equation}
where $x_i$ records the measurement observed at time $t_i$, such as a demand volume, a utilization sample, or an event duration, together with any covariates the source attaches.
The timestamps are the entity's own, so the inter-event gaps $\Delta_i=t_i-t_{i-1}$ vary over orders of magnitude rather than following a fixed sampling period.
Over such streams, operators pose two recurring questions: \emph{forecasting} the distribution of the next measurements at one or many horizons and, downstream of that forecast, the operational \emph{decisions} that consume its quantiles, from scheduling batch jobs to provisioning capacity.

\myparatight{Positioning}
Table~\ref{tab:position} places \sys against representative foundation models for systems and time series.
Its columns ask whether a model is generative in the sense of exposing a full predictive distribution, is pretrained on system telemetry, serves arbitrary output horizons, reuses one model design across tasks, and models irregular inter-event gaps.
No prior model is at once generative, pretrained on system telemetry, horizon-agnostic, reusable across tasks, and time-aware, and \sys is the first to satisfy all five.

\begin{table}[t]
\centering
\caption{Positioning against foundation models for systems and time series. \cmark: yes, \xmark: no, $\sim$: partial.}
\label{tab:position}
\setlength{\tabcolsep}{4pt}
\resizebox{\linewidth}{!}{%
\begin{tabular}{lcccccc}
\toprule
Model & Domain & Gen. & Sys-pretr. & Any-hor. & Multi-task & Time-aware \\
\midrule
ET-BERT~\cite{etbert}, netFound~\cite{netfound}   & packets    & \xmark & \cmark & \xmark & \xmark & \xmark \\
NetLLM~\cite{netllm}                               & networking & \cmark & \xmark & \xmark & \cmark & \xmark \\
MobiGPT~\cite{mobigpt}, UoMo~\cite{uomo}           & wireless   & $\sim$ & \cmark & $\sim$ & \cmark & \xmark \\
LWM~\cite{lwm}                                     & wireless   & \xmark & \cmark & \xmark & \xmark & \xmark \\
Chronos~\cite{chronos}, TimesFM~\cite{timesfm}     & generic TS & \cmark & \xmark & \cmark & \xmark & \xmark \\
\textbf{\sys (ours)}                               & HPC+ISP    & \cmark & \cmark & \cmark & \cmark & \cmark \\
\bottomrule
\vspace{-.2in}
\end{tabular}}
\end{table}

\myparatight{Key observation: telemetry is bursty in time yet correlated in value}
We measure, per entity, the coefficient of variation (CV) of inter-event gaps and the lag-one autocorrelation of the target across five public traces (Table~\ref{tab:obs}).
Inter-event timing varies widely, as job and task logs~\cite{fdata,borg,alibaba} reach a CV between one and four and node sampling punctuated by downtime~\cite{m100} reaches nearly twelve, while only the fixed-rate Azure sampler~\cite{azure} attains the zero of a strict grid.
At the same time, targets are frequently autocorrelated, up to $0.93$.
Both signals, the value history and the irregular timing, are available to a model that reads the stream in order yet discarded by a uniform-grid forecaster, which motivates a time-aware generative model of telemetry.

\begin{table}[b]
\centering
\vspace{-.1in}
\caption{Telemetry is irregular yet predictable. Median per-entity inter-event CV and lag-one autocorrelation. Larger CV means less clock-like timing.}
\label{tab:obs}
\setlength{\tabcolsep}{5pt}
\begin{tabular}{llcc}
\toprule
Dataset & Type & Inter-event CV & Autocorr. \\
\midrule
F-DATA (HPC jobs)~\cite{fdata}    & job log  & 3.97  & 0.48 \\
Borg (DC tasks)~\cite{borg}      & task log & 3.08  & 0.82 \\
Alibaba (DC tasks)~\cite{alibaba}   & task log & 1.03  & $-0.06$ \\
Azure (VM CPU)~\cite{azure}       & sampling & 0.00  & 0.09 \\
M100 (node power)~\cite{m100}    & sampling & 11.94 & 0.93 \\
\bottomrule

\end{tabular}
\end{table}

\myparatight{From forecasts to decisions}
We specify how each replayed decision consumes a served quantile.
Under EASY backfilling~\cite{lifka}, jobs queue in arrival order, and the scheduler makes a reservation for the queue head by projecting when running jobs will release their nodes from the walltime estimates those jobs carry~\cite{tsafrir}.
A waiting job may jump ahead of the queue head only if its own estimate indicates it will finish before the reservation, or if it fits entirely on nodes the reservation leaves spare.
The estimates therefore govern both the reservation and every backfill decision.
We score a schedule by \emph{bounded slowdown}~\cite{carastan},
\begin{equation}
\mathrm{BS} = \max\!\left(\frac{\text{wait} + \text{run}}{\max(\text{run},\,\epsilon)},\, 1\right),
\label{eq:bs}
\end{equation}
where $\text{wait}$ and $\text{run}$ are a job's queue wait and running time, and $\epsilon$ is a short floor that keeps very short jobs from dominating the ratio, with lower values preferred.
Capacity provisioning consumes a forecast differently, as the capacity for the next interval is set to the served quantile $\mathcal{Q}_\tau$ of predicted demand, and the operator moves $\tau$ to trade overprovisioning against the rate at which demand exceeds the provisioned capacity.
In both cases, we evaluate by \emph{decision replay}, re-executing the recorded decision process on the historical trace with the served estimate substituted for the deployed one, holding arrivals, true runtimes, and job sizes fixed, and scoring the decisions that result.

\section{The \sys Framework}
\label{sec:method}

\sys is a generative foundation model over per-entity telemetry event streams, assembled from three composable stages: a pluggable input adapter that embeds each event, a shared set of attention layers that summarize an entity's history, and a distributional head that emits the next measurement as a full predictive distribution.
We design \sys to learn the conditional law of the next event an entity emits, $p(x_{i+1}\mid x_{\le i})$, so that it can roll a stream forward to whatever horizon an operator queries.
Prior networking models instead use masked traffic encoders~\cite{etbert,netfound} or frozen language models adapted to networking tasks~\cite{netllm}, neither of which exposes such a distribution.

\myparatight{Design goals} Four goals shape the framework:
1) \textit{one artifact per domain}, meaning any entity history and any horizon within a domain are served by a single set of weights, and a new domain reuses the design unchanged;
2) \textit{calibrated distributions}, meaning the served quantiles carry their nominal coverage so that a decision policy can choose its operating point a priori;
3) \textit{submit-time causality}, meaning every prediction is computable from information available when the decision fires, with no future values or timestamps;
and 4) \textit{negligible inference cost}, meaning milliseconds per decision on the commodity CPUs that already run the scheduler and the provisioning loop.
Together, these goals operationalize the introduction's three requirements while adding the calibration and inference-cost constraints that an operational control plane imposes.

\myparatight{A High-Level Walkthrough} Figure~\ref{fig:arch} sketches the pipeline, and its numbered steps trace one decision end to end.
The input adapter maps each incoming event to a $D$-dimensional embedding of its value and the covariates of its deployment mode (step 1).
The $L$ attention layers aggregate the causal history and return a state that summarizes the entity up to the current event (step 2).
The distributional head turns this state into a Student-t mixture, from which any quantile is estimated empirically from $N_s$ Monte Carlo draws, as the mixture admits no closed-form inverse (step 3).
The served quantile becomes the walltime estimate or the provisioned capacity of the replayed decision (step 4).
In forecaster mode, a horizon feature conditions the head, so one pass yields a predictive distribution for every horizon at once.
In estimator mode, the model instead reads the final position of the current job's history and serves one clipped quantile as the walltime estimate.
The weights are pretrained once (step 5) and reused on later months zero-shot (step 6).

\subsection{Problem Formulation}
\label{ssec:problem}

A telemetry source exposes a set of entities, and each entity emits an ordered stream $S_e$ of timestamped events as in \eqref{eq:stream}.
Writing $\mathcal{H}_i=(x_{\le i},\, t_{\le i+1},\, c_{i+1})$ for what the control plane holds when the next event is submitted, \sys models the one-step predictive distribution
\begin{equation}
p\!\left(x_{i+1}\mid \mathcal{H}_i\right),
\label{eq:target}
\end{equation}
where $x_{\le i}$ is the observed value history, $t_{\le i+1}$ collects the timestamps through the submit time of the next event, and $c_{i+1}$ holds the submit-time covariates the scheduler already knows before the event completes, such as the requested node count and the user-declared limit.
In forecaster mode, the same construction is conditioned additionally on a requested horizon $h$, which \S\ref{ssec:arch} realizes without rollout.
Writing $\hat{F}_i$ for the cumulative distribution function of \eqref{eq:target} and $\mathcal{Q}_q(\hat{F}_i)$ for its $q$-quantile, the model is \emph{calibrated at level} $q$ when the served quantile carries its nominal coverage,
\begin{equation}
\Pr\!\left[\,x_{i+1} \le \mathcal{Q}_q(\hat{F}_i)\;\middle|\;\mathcal{H}_i\right] \;=\; q,
\label{eq:calib}
\end{equation}
a property whose aggregate consequence we audit on held-out data by comparing nominal levels with empirical coverage~\cite{gneiting}.
Calibration is the property both decisions consume, rather than point accuracy.
For walltime estimation, the predictive distribution is reduced to one served value by a quantile-and-clip rule,
\begin{equation}
\hat{w} = \mathrm{clip}\!\left(\mathcal{Q}_q\big(\hat{F}_i\big),\, w_{\min},\, w_{\mathrm{user}}\right),
\label{eq:clip}
\end{equation}
where $w_{\min}$ is a short minimum-runtime floor, for instance $60$ seconds, and $w_{\mathrm{user}}$ is the user-declared limit, which also remains the \emph{kill limit} at which an overrunning job is terminated.
For capacity provisioning, the reduction is simpler, as the capacity provisioned for the next interval on a link is the served quantile $\mathcal{Q}_\tau(\hat{F}_i)$ of predicted demand.

\subsection{Model Architecture}
\label{ssec:arch}

\myparatight{Input adapter} Each event arrives as a feature vector that the adapter embeds linearly into width $D$.
In estimator mode, the vector carries the entity's previous observed runtime, lagged and log-compressed, together with the job's submit-time request, its node count and declared limit, and an extended configuration adds the logarithm of the inter-arrival gap and clock features, the sine and cosine of the hour of day and the normalized day of week.
In forecaster mode, the adapter pairs a projection of each observed value with a learned embedding of its quantile bin, one of $K_c$ (e.g., $256$) bins spaced by the training-region quantiles, which adds a discrete regime signal to the continuous magnitude.
Targets are standardized in log space, keeping the heavy-tailed job runtimes and link traffic volumes on a scale the head models well.
Because the adapter is the only source-specific stage, a new telemetry source, whether a switch counter feed or a per-virtual-machine utilization stream, is added by declaring its feature vector while the stages above it stay fixed.

\myparatight{Intensity-preserving attention} The $L$ attention layers are built around a telemetry-specific principle, namely that the absolute intensity of activity is itself informative.
A standard Transformer's softmax attention~\cite{vaswani} renormalizes every history to sum to one, which flattens a submission burst and an idle period into the same relative weights.
Each layer therefore scores the causal past pointwise~\cite{hstu} and lets the aggregate scale with the entity's level of activity.
Replacing the softmax mixer is an established route to new attention properties, such as robustness~\cite{protransformer}, and here it targets intensity preservation.
From the layer input $Z$, four projections produce a gating branch, a value branch, and the query--key pair,
\begin{equation}
U(Z),\,V(Z),\,Q(Z),\,K(Z) \;=\; \mathrm{Split}\!\left(\phi_1\!\left(f_1(Z)\right)\right),
\label{eq:hstu-split}
\end{equation}
where $f_1$ is a linear map and $\phi_1$ is the sigmoid linear unit (SiLU) nonlinearity.
The layer then weights the causal past through pointwise SiLU attention scores, forgoing softmax normalization, and gates the pooled values elementwise,
\begin{equation}
Y \;=\; f_2\!\left(\mathrm{Norm}\!\left(\phi_2\!\left(QK^{\top} + B\right)V\right)\odot U\right),
\label{eq:hstu-agg}
\end{equation}
where $\phi_2$ is again SiLU, $\odot$ denotes the elementwise product, $\mathrm{Norm}$ is layer normalization, and $f_2$ is the output projection.
The learned bias $B$ carries a relative position term that weights events by their order in the stream and an optional time term, multi-scale Fourier features of the causal gap $\log(1+t_i-t_j)$ projected per head, which forecaster mode enables throughout so that the layer reads irregular spacing directly.
The four branches divide the computation, where $Q$ and $K$ select which past events are relevant to the one being processed, $V$ carries what each past event contributes to the summary, and $U$ decides how much of the pooled history the current event admits.
On a job stream, for instance, a submission burst raises the pairwise scores among its own events and the aggregate grows with the burst, while after a long idle period the gate admits little of the now-stale history.
A causal mask confines every position to its own past, and LayerScale~\cite{layerscale} stabilizes optimization at depth.
One unmodified layer stack thus serves both deployment modes, and the defaults in \S\ref{sec:exp} state which timing inputs each replay serves.
Inference cost scales as the attention product over a context of bounded length, so the per-decision compute is a small constant, as the measured milliseconds of \S\ref{sec:exp-dep} confirm, and the estimator runs on the CPU-only control plane that hosts the scheduler.

\myparatight{Distributional head} The head parameterizes a mixture of $M$ Student-t components~\cite{mdn}, an expressive form for the heavy right tails that job runtimes and traffic volumes exhibit.
From the state $z_i$, the head emits mixture weights $\alpha_m$, locations $\mu_m$, scales $\sigma_m$, and degrees of freedom $\nu_m$ for $m{=}1,\dots,M$, and assembles the predictive density
\begin{equation}
p(x_{i+1}\mid z_i) = \sum_{m=1}^{M}\alpha_m\,\mathcal{T}\!\left(x_{i+1};\mu_m,\sigma_m,\nu_m\right),
\label{eq:mixture}
\end{equation}
where a softmax makes the weights $\alpha_m$ sum to one, a softplus keeps each scale $\sigma_m>0$, and a positive offset keeps each $\nu_m>2$ so the component variance stays finite.
Each component is the location-scale Student-t density
\begin{equation}
\mathcal{T}(x;\mu,\sigma,\nu) = \frac{\Gamma\!\big(\tfrac{\nu+1}{2}\big)}{\Gamma\!\big(\tfrac{\nu}{2}\big)\sqrt{\pi\nu}\,\sigma}\left[1+\frac{1}{\nu}\!\left(\frac{x-\mu}{\sigma}\right)^{2}\right]^{-\frac{\nu+1}{2}},
\label{eq:studentt}
\end{equation}
where $\Gamma$ is the gamma function and a smaller $\nu$ yields a heavier tail than a Gaussian of the same scale, letting the head place mass on long runtimes and rare traffic spikes.
Unlike a quantile-regression head~\cite{mqrnn}, which fixes a grid of levels at training time and must be refit when the decision moves its operating point, the mixture head represents the full distribution once and serves any quantile on demand.

\myparatight{Direct multi-horizon conditioning} In forecaster mode, the horizon $h$ enters as Fourier features of $\log(1+h)$ that are projected and added to the state, so a single pass produces the distribution at every horizon.
Writing $z_i$ for the state at event $i$, the head reads a horizon-shifted state
\begin{equation}
\tilde{z}_i(h) = z_i + W_\psi\,\psi\!\left(\log(1+h)\right),
\label{eq:horizon}
\end{equation}
where $\psi$ maps the scalar $\log(1+h)$ to sinusoidal Fourier features and $W_\psi$ projects them to width $D$ before the head decodes $\tilde{z}_i(h)$.
Because the horizon enters only through this additive term, the layers run once and each requested horizon costs one more projection rather than one more rollout step, and \S\ref{ssec:theory} quantifies the gap to autoregressive rollout.

\subsection{Training and Inference}
\label{ssec:protocol}

\sys is pretrained once in an offline pipeline and then queried once per decision, which keeps the data-intensive stage off the control loop the scheduler and the provisioner run.
Pretraining groups the archived telemetry into per-entity streams in submission order and scores every event against the history that preceded it, so each position predicts its entity's next measurement from its own causal past by minimizing
\begin{equation}
\mathcal{L}(\theta) = -\frac{1}{|\mathcal{B}|}\sum_{i\in\mathcal{B}} \log p_\theta\!\left(x_{i+1}\mid z_i\right),
\label{eq:nll}
\end{equation}
where $\mathcal{B}$ indexes the scored positions in a batch, $\theta$ collects the adapter, attention, and head parameters, and $p_\theta$ is the mixture of \eqref{eq:mixture} evaluated at the realized measurement $x_{i+1}$.
The log-likelihood is a strictly proper scoring rule~\cite{gneiting}, so its minimizer over a sufficiently rich family is the true conditional distribution rather than any single point summary, which is what a decision consuming a quantile needs.
After training, the weights and the log-space normalization statistics are frozen together, so a later deployment window enters the same scale the head was fit on.
At inference, the estimator fetches the entity's most recent $C$ events, appends the submit-time covariates $c_{i+1}$, and runs one pass through the model.
It then estimates $\mathcal{Q}_q(\hat{F}_i)$ by Monte Carlo sampling from the mixture and returns the clipped estimate of Eq.~\eqref{eq:clip} within a single scheduling cycle.
Training reads only data strictly earlier than the evaluation window, and each entity's history grows in submit order at inference, so no prediction ever reads its own future, a contract every baseline in \S\ref{sec:exp} obeys as well.

\myparatight{Pretraining and zero-shot transfer} Because the estimator is pretrained once rather than fit per deployment, a single artifact can be reused across time.
We pretrain on earlier months and deploy the frozen artifact on later months, which supply only their own per-entity histories as context, so successive months are served as they arrive without retraining.

\subsection{Why Calibrated Quantiles Improve Decisions}
\label{ssec:theory}

In this subsection, we provide a theoretical account of the design by establishing how the calibration property of Eq.~\eqref{eq:calib} converts into guarantees on the two replayed decisions and why direct decoding avoids the compounding penalty of rollout at long horizons.
The analysis rests on two assumptions that are standard in probabilistic forecasting and in the analysis of learned dynamical models~\cite{gneiting,dad,hewamalage}, and both are attainable in deployment rather than idealizations.
Calibration is what the objective of Eq.~\eqref{eq:nll} already pursues, since a strictly proper scoring rule is minimized by the conditional law itself, and the served level is audited against measured coverage before deployment rather than assumed.
The rollout comparison needs only a bounded expected one-step error and a Lipschitz feedback map.

\begin{assumption}[Quantile calibration]
\label{as:calib}
At the served level and horizon, the served distribution satisfies Eq.~\eqref{eq:calib} given the prediction-time history, for the events the decision consumes.
\end{assumption}

Under Assumption~\ref{as:calib} at level $\tau$, demand exceeds the capacity provisioned at the served quantile at the promised rate $1-\tau$.
The classical newsvendor argument then identifies the cost-optimal level as the fractile $\kappa_u/(\kappa_u+\kappa_o)$ of the per-unit violation cost $\kappa_u$ and overprovisioning cost $\kappa_o$, since the expected cost is convex in the provisioned capacity and its derivative changes sign at that fractile~\cite{banrudin}.
Calibration at the served level is what carries this first-order condition from the unknown demand law over to the distribution \sys serves.
Thus, the served quantile becomes a contract, and \S\ref{sec:exp-ces} tests it empirically, where measured violation rates stay at or below the promised $1-\tau$ at every served level while a miscalibrated baseline exceeds it at two of three served levels.

\begin{proposition}[Reservation reliability under EASY]
\label{prop:easy}
Consider the $J$ jobs backing a reservation, each with a walltime estimate served by Eq.~\eqref{eq:clip} at level $q$ upon submission.
Under Assumption~\ref{as:calib}, the probability, assessed at submission, that any of them overruns its served estimate is at most $J(1-q)$, and the clipping in Eq.~\eqref{eq:clip} never degrades this coverage.
\end{proposition}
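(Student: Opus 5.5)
The argument is a union bound over per-job overrun events, preceded by a pointwise check that clipping can only shrink the overrun event. Index the jobs backing the reservation by $j=1,\dots,J$. For each job, let $r_j$ be its true runtime and $\hat{F}^{(j)}$ its served predictive law at submission. Write $u_j=\mathcal{Q}_q(\hat{F}^{(j)})$ for the unclipped quantile and $\hat{w}_j=\mathrm{clip}(u_j,w_{\min},w_{\mathrm{user},j})$ for the served estimate of Eq.~\eqref{eq:clip}. The event to control is $\bigcup_{j}\{r_j>\hat{w}_j\}$, and all probabilities are taken conditionally on the information available at each job's submission, in line with Assumption~\ref{as:calib}.

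\textbf{Step 1: per-job coverage of the unclipped quantile.} Assumption~\ref{as:calib} at level $q$ gives $\Pr[r_j\le u_j\mid\mathcal{H}^{(j)}]=q$. Hence $\Pr[r_j>u_j]=1-q$, first conditionally and then, by the tower property, at the level of the joint assessment.

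\textbf{Step 2: clipping never enlarges the overrun event.} Fix $j$ and split into the three branches of the clip.
\begin{itemize}
\item If $w_{\min}\le u_j\le w_{\mathrm{user},j}$, then $\hat{w}_j=u_j$ and the event is unchanged.
\item If $u_j<w_{\min}$, then $\hat{w}_j=w_{\min}>u_j$, so $\{r_j>\hat{w}_j\}\subseteq\{r_j>u_j\}$.
\item If $u_j>w_{\mathrm{user},j}$, then $\hat{w}_j=w_{\mathrm{user},j}$. This is the kill limit, so the realized runtime in replay satisfies $r_j\le w_{\mathrm{user},j}=\hat{w}_j$ and no overrun occurs.
\end{itemize}
In all three cases $\{r_j>\hat{w}_j\}\subseteq\{r_j>u_j\}$, so $\Pr[r_j>\hat{w}_j]\le 1-q$. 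This is the claim that clipping never degrades coverage.

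\textbf{Step 3: union bound.} By subadditivity,
\[
\Pr\Big[\bigcup_{j=1}^{J}\{r_j>\hat{w}_j\}\Big]\le\sum_{j=1}^{J}\Pr[r_j>\hat{w}_j]\le J(1-q).
\]
No independence across jobs is needed, which matters because jobs sharing a user stream are strongly dependent.

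\textbf{Main obstacle.} The delicate point is the upper clip. The inclusion there depends on reading "runtime" as the realized, kill-enforced running time. If instead $r_j$ denotes a latent demand that could exceed $w_{\mathrm{user},j}$, the inclusion fails: $\hat{w}_j=w_{\mathrm{user},j}<u_j$ admits runtimes in $(w_{\mathrm{user},j},u_j]$ as overruns. I would therefore state explicitly that the replay holds true runtimes fixed and that these already respect the kill limit, since the historical jobs were themselves killed at $w_{\mathrm{user},j}$. With that reading, the upper branch contributes no overrun at all.

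A secondary care point is that the bound must be conditional on each job's own submission history. This is because the jobs backing one reservation were submitted at different times. Assumption~\ref{as:calib} covers exactly the events the decision consumes, so Step 1 applies to each job individually, and the union bound requires nothing more.
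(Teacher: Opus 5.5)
Your proposal is correct and follows the paper's own proof sketch. The paper uses the same split over the three branches of the clip: calibration gives overrun probability $1-q$ in the interior branch, the kill limit rules out any overrun when the estimate is clipped to $w_{\mathrm{user}}$, and raising the estimate to $w_{\min}$ only enlarges coverage. It then takes a union bound over the $J$ backing jobs, as you do. Your pointwise event inclusion $\{r_j>\hat{w}_j\}\subseteq\{r_j>u_j\}$ and your explicit note that the upper branch needs kill-enforced realized runtimes make precise what the sketch leaves implicit.
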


\begin{proof}[Proof sketch]
For one job, if $w_{\min} \le \mathcal{Q}_q \le w_{\mathrm{user}}$ the served estimate equals $\mathcal{Q}_q$ and the overrun probability is $1-q$ by Eq.~\eqref{eq:calib}.
Otherwise the estimate is $w_{\mathrm{user}}$, which the runtime cannot exceed because the kill limit enforces it, so the overrun probability is zero.
Raising the estimate to the floor $w_{\min}$ only enlarges coverage.
A union bound over the $J$ backing jobs gives the claim.
\end{proof}

In other words, with a calibrated $q$, the projected node-release times behind a reservation are simultaneously conservative with probability at least $1-J(1-q)$, a guarantee that becomes operative at conservative served levels where $J(1-q)$ stays small.
Read as a sizing rule, an operator targeting reservation reliability $1-\delta$ serves at $q \ge 1-\delta/J$.
At the same time, the estimates stay well below the deployed limits and keep backfill holes open.
The architecture of \S\ref{ssec:arch} enters here, since the bound is only as useful as the distribution behind it.
The mixture head keeps a heavy right tail per job rather than one global margin, and the intensity-preserving attention supplies the history that separates a user's long jobs from the short ones.
Thus, the conditional quantile is conservative precisely for the jobs whose histories predict long runtimes and aggressive elsewhere, a per-job adaptivity no single point estimate provides, and the replay of \S\ref{sec:exp-bf} measures its value at moderate served levels.
Capacity provisioning consumes the same guarantee in a different currency, where $1-\tau$ is the fraction of intervals in which a link carries more than the capacity provisioned for it, so the served level states a service-level target rather than a tuned constant.
Direct decoding carries a guarantee of its own, under the standard regularity for autoregressive prediction, namely a bounded expected one-step error and a Lipschitz feedback map~\cite{dad,hewamalage}.
Rollout feeds each prediction back as input, so the error at every step decomposes into a fresh one-step term plus the amplified error it inherited.
Unrolling this recursion bounds the expected rollout error at horizon $h$ by a geometric sum in the Lipschitz constant of the feedback map, a bound that stays finite only under a strict contraction, grows linearly in $h$ when the constant is one, and grows exponentially beyond it.
Bursty telemetry offers no certificate of contraction~\cite{dad}.
The direct head of Eq.~\eqref{eq:horizon} removes the compounding term entirely, as each horizon is a separate target read from the same observed history, so no prediction is ever fed back and each horizon incurs only its own estimation error.
Thus, the widening gap measured in \S\ref{sec:exp-abl} is consistent with the compounding the bound predicts rather than a tuning effect.

\begin{remark}
The guarantees are conditional on measured calibration rather than unconditional properties of the architecture, and the rollout comparison rests on the stated regularity of the feedback map.
Once measured, calibration makes the guarantees above operative up to the Monte Carlo error of quantile estimation.
\end{remark}

\section{Experiments}
\label{sec:exp}

\subsection{Experimental Setup}

We evaluate \sys at the decision level on two tasks: 1) walltime estimation for EASY backfilling on F-DATA~\cite{fdata}, production job logs from the Fugaku supercomputer; and 2) capacity provisioning on CESNET-TimeSeries24~\cite{cesnet}, hourly traffic volumes from a national ISP backbone network.
The primary F-DATA month contains roughly $117$K jobs, of which the replay window covers $48{,}826$ jobs from $298$ users over two weeks, following a one-week warm-up that fits the estimator and seeds per-user histories.
The median user-declared limit, six hours, exceeds the median true runtime, roughly $21$ minutes, by more than an order of magnitude.
CESNET is split chronologically at $80/20$ per series and evaluated on $3{,}000$ held-out windows.

\myparatight{Decision replay}
Each F-DATA month is replayed through an EASY backfilling scheduler in which the walltime estimate is the \textit{only} quantity that varies between the compared estimators, and the kill limit remains the user-declared value.
Reservations apply the standard estimate-extension correction~\cite{tsafrir}, extending an expired estimate in place rather than killing the job.
Cluster capacity is set from the offered load of the replayed window, and we evaluate the backlogged band (offered-to-capacity ratios between $0.95$ and $1.46$), where queues form.
We report the mean and the 95th percentile of bounded slowdown as defined in \S\ref{sec:prelim}.
Forecast quality is reported through per-series Mean Absolute Scaled Error (MASE)~\cite{m4}, for which $1.0$ matches the naive last-value forecast, pinball loss at the $0.99$ quantile~\cite{mqrnn}, and a nine-quantile Continuous Ranked Probability Score (CRPS) approximation~\cite{gneiting}.

\myparatight{Estimator baselines}
We compare five walltime sources: 1) the user-declared limit, the operator status quo; 2) the per-user rolling 90th-percentile of past runtimes (History q90); 3) the average of the user's last two runtimes (Tsafrir last-2), the canonical system-generated predictor~\cite{tsafrir}; 4) gradient-boosted quantile regression on causal tabular features (GBM quantile), the strongest non-sequential learned baseline in recent scheduling studies~\cite{gaussier,uarp}; and 5) \sys, serving the requested quantile of its predicted runtime distribution at submit time.
Scheduling with true runtimes is a reference, and all walltime sources observe identical causal histories.

\myparatight{Implementation and defaults}
Models are implemented in PyTorch and trained with \texttt{bf16} autocast on one A30 GPU, and per-user histories are maintained strictly in submission order during both training and replay.
The runtime estimator uses width $D{=}256$, $L{=}4$ attention layers, eight attention heads, a context of $C{=}96$ events, $M{=}6$ mixture components, $N_s{=}2048$ Monte Carlo draws per served quantile, and $30$ training epochs, serving $q{=}0.5$ in the primary tables and $q{=}0.6$ in the capacity and month sweeps, and the CESNET forecaster uses $D{=}512$ and $L{=}8$.
\sys rows in Table~\ref{tab:backfill} report the mean and standard deviation over five seeds.
The primary and month replays serve the estimator in its minimal configuration, which reads the previous runtime and the request covariates without event-timing inputs, as richer timing inputs did not improve this replay.
The transfer and pretraining-scale protocols serve the extended covariates and the attention time term, and the arms of \S\ref{sec:exp-abl} are covariate-matched as noted there.

\subsection{Backfilling Replay on F-DATA}
\label{sec:exp-bf}

\begin{table}[t]
\centering
\caption{Bounded slowdown by walltime source at two backlogged capacities on the primary month. Bold marks the best estimator, and true-runtime scheduling is included only as a reference.}
\label{tab:backfill}
\setlength{\tabcolsep}{4pt}
\begin{tabular}{lcccc}
\toprule
& \multicolumn{2}{c}{$20$K nodes} & \multicolumn{2}{c}{$23$K nodes} \\
\cmidrule(lr){2-3}\cmidrule(lr){4-5}
Walltime source & mean & P95 & mean & P95 \\
\midrule
User estimates       & 94.8 & 279.9 & 53.1 & 171.4 \\
History q90          & 56.6 & 104.8 & 37.6 & 53.9 \\
Tsafrir last-2       & 24.3 & 31.1  & 23.1 & 21.2 \\
GBM quantile         & 25.4 & 34.9  & 19.7 & 32.1 \\
\textbf{\sys}        & \textbf{22.2}$\pm$2.1 & \textbf{28.0} & \textbf{15.5}$\pm$1.1 & \textbf{17.8} \\
\midrule
\textit{True runtimes (ref.)} & \textit{17.7} & \textit{69.6} & \textit{28.9} & \textit{169.9} \\
\bottomrule
\vspace{-.1in}
\end{tabular}
\end{table}

\myparatight{\sys schedules near, and beyond, the true-runtime reference}
Table~\ref{tab:backfill} reports the primary month at two capacities in the backlogged band.
It is observed that \sys attains the lowest mean bounded slowdown among all estimators at both capacities, $22.2$ and $15.5$ versus $24.3$ and $23.1$ for the strongest history-based estimator, respectively, improving on the deployed user estimates by up to approximately $77\%$.
At the $23$K-node capacity \sys also surpasses scheduling with \textit{true runtimes} ($15.5$ versus $28.9$), and its P95 tail is the best of every walltime source, reference included, at both capacities.
This is because schedule quality under EASY is not monotone in estimate accuracy, a property established well before learned estimators~\cite{tsafrir}.
Estimates far above the true runtime, as the deployed user limits are, block backfilling outright because an inflated request no longer fits the holes a reservation leaves.
Exact runtimes sit at the opposite extreme, where reservations receive no protective slack and the scheduler forfeits the heel-and-toe margin by which slightly conservative estimates keep short jobs flowing into shrinking holes~\cite{tsafrir}.
The optimum therefore sits between the two, at estimates that are accurate yet slightly conservative for the jobs that back reservations, which is precisely the operating point a calibrated conditional quantile provides.
Consequently, the true-runtime reference does not bound the schedule quality attainable under EASY, and it is not monotone in cluster size either, since the backfill opportunities that carry the effect grow with the free capacity.
Its tail makes the same point, as the reference P95 rises from $69.6$ to $169.9$ across the two capacities while every estimator improves, the signature of a few queue-head jobs repeatedly displaced rather than of a uniformly worse schedule.
Figure~\ref{fig:band} extends the replay to five capacities across the backlogged band at $q{=}0.6$, where \sys improves on the deployed estimates and the history baseline at every capacity and surpasses the true-runtime reference at the $23$K edge ($16.1$ versus $28.9$).

\begin{figure}[t]
\centering
\includegraphics[width=0.85\linewidth]{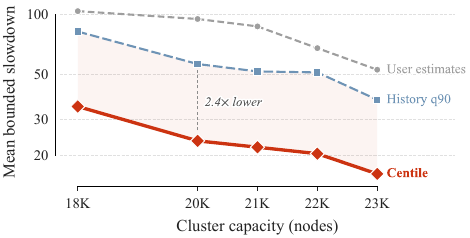}
\caption{Mean bounded slowdown across the backlogged capacity band on the primary month. \sys improves on the deployed estimates and the history baseline at every capacity.}
\label{fig:band}
\end{figure}

\begin{table}[b]
\centering
\caption{Mean bounded slowdown across F-DATA months at representative backlogged operating points. $\dagger$ marks months served zero-shot by the April-pretrained estimator.}
\label{tab:months}
\setlength{\tabcolsep}{3.5pt}
\begin{tabular}{lccccc}
\toprule
Walltime source & Jun & Jul$^\dagger$ & Aug & Sep & Oct$^\dagger$ \\
\midrule
User estimates & 3.68 & 66.3 & 31.7 & 21.9 & 14.0 \\
History q90    & 2.54 & 58.2 & 10.1 & 14.2 & 6.09 \\
Tsafrir last-2 & 3.32 & 58.1 & 9.08 & 10.5 & 5.85 \\
GBM quantile   & 3.35 & 54.4 & 14.6 & 11.7 & 4.15 \\
\textbf{\sys}  & \textbf{1.86} & \textbf{42.6} & \textbf{7.35} & \textbf{7.59} & \textbf{3.78} \\
\bottomrule
\end{tabular}
\end{table}

\begin{figure}[!b]
\centering
\includegraphics[width=0.85\linewidth]{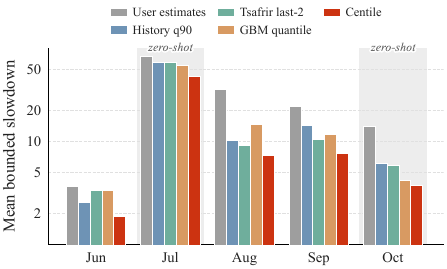}
\caption{\sys attains the lowest mean bounded slowdown on every month. Shaded months are served zero-shot by the April-pretrained model.}
\label{fig:months}
\end{figure}

\myparatight{One pretrained estimator transfers across months}
Table~\ref{tab:months} and Fig.~\ref{fig:months} extend the replay to five further months at representative backlogged operating points.
We observe that \sys is the best estimator in every column.
The July and October columns are served by the \textit{zero-shot} model, pretrained on April and never shown the target month, while the remaining columns fit the estimator on their own warm-up windows.
This advantage can be attributed to the pretraining data, which is larger and more diverse than any single month's warm-up window.
The advantage holds at the P95 tail in every column, adding no tail risk.

\myparatight{Impact of pretraining scale}
Enlarging the pretraining data from roughly $148$K to $1.07$M jobs lowers the mean bounded slowdown at the moderate-overload point by approximately $35\%$ ($11.9$ to $7.7$).
The largest pretraining set of roughly $1.8$M jobs attains the best light-load slowdown ($2.4$, roughly $25\%$ below the smallest set).

\subsection{Cross-Domain Transfer}
\label{sec:exp-transfer}

\myparatight{Setup}
The zero-shot months above hold the domain fixed, so we now ask whether the pretrained weights transfer across domains.
We pretrain the estimator design in its extended configuration on source traces of per-institution hourly traffic and per-virtual-machine CPU telemetry, neither of which contains a single job duration, cast into the estimator's event format with per-domain standardization.
These weights then initialize the F-DATA walltime estimator, fine-tuned on the last six hours, one day, two days, or all seven days of target history, then replayed at the $20{,}000$-node capacity as in \S\ref{sec:exp-bf}.
Three arms share every remaining setting: a scratch estimator trained from random initialization at each budget, a transfer estimator initialized from the cross-domain checkpoint, and a statistics-only arm that loads the checkpoint and fits only normalization statistics.
The history each arm may read at inference is identical, every entry is a three-seed mean with standard deviation, and the budget protocol yields numbers slightly above the five-seed results of Table~\ref{tab:backfill}.

\begin{table}[t]
\centering
\caption{Cross-domain transfer. Mean bounded slowdown at the $20{,}000$-node capacity (three-seed mean$\pm$std) as the target-data budget shrinks, with Transfer initialized from network and cloud telemetry alone. Bold marks the better arm.}
\label{tab:transfer}
\setlength{\tabcolsep}{6pt}
\begin{tabular}{lcc}
\toprule
Target budget & Scratch & Transfer \\
\midrule
$6$ hours       & $83.0\pm36.0$ & $\mathbf{35.9\pm1.3}$ \\
$1$ day         & $43.7\pm13.1$ & $\mathbf{22.8\pm2.3}$ \\
$2$ days        & $25.8\pm1.2$  & $\mathbf{24.6\pm2.9}$ \\
$7$ days (full) & $26.0\pm2.2$  & $26.0\pm1.5$ \\
\midrule
Statistics only, no training & \multicolumn{2}{c}{$38.2$} \\
\bottomrule
\vspace{-.1in}
\end{tabular}
\end{table}

\myparatight{Cross-domain pretraining makes scarce-data training reliable}
We observe that at the six-hour data budget the three scratch seeds are highly inconsistent, with mean bounded slowdown ranging from $36.7$ to $124.5$, so a single from-scratch training run at this budget is not auditable in advance.
The transfer arm is instead consistent across seeds, $35.9\pm1.3$ (Table~\ref{tab:transfer}), and lowers the served-quantile pinball loss by roughly $35\%$ at every budget below the full window.
This reliability can be attributed to the cross-domain checkpoint, which supplies an event-stream prior that random initialization cannot recover from a few hundred jobs, and a checkpoint pretrained on the network stream alone performs comparably.
As the budget grows the two arms converge, $24.6$ versus $25.8$ at two days and parity at seven days.

\myparatight{One day of target data suffices}
Notably, the transfer estimator fine-tuned on a single day of target history reaches $22.8$ versus $43.7$ for scratch, and the worst transfer seed ($26.1$) outperforms the best scratch seed ($33.9$).
One day of transfer already lands below the protocol's own seven-day arms and well below the deployed estimates ($94.8$) and the history baseline ($56.6$).
The statistics-only arm yields $38.2$, below both deployed references, because the transferred weights already encode the heavy-tailed event structure the decision consumes.

\subsection{Capacity Provisioning on CESNET}
\label{sec:exp-ces}

\begin{table}[t]
\centering
\caption{Provisioning violation rate (\%) at matched served quantiles on CESNET. \sys incurs fewer violations at every $\tau$ and serves quantile levels beyond the zero-shot model's grid.}
\label{tab:pareto}
\setlength{\tabcolsep}{5pt}
\begin{tabular}{lcccc}
\toprule
Served quantile $\tau$ & 0.5 & 0.8 & 0.9 & 0.95 \\
\midrule
Chronos-Bolt zero-shot~\cite{chronos} & 48.7 & 23.0 & 13.1 & --- \\
\textbf{\sys} & \textbf{42.8} & \textbf{17.5} & \textbf{9.0} & \textbf{4.7} \\
\bottomrule
\vspace{-.1in}
\end{tabular}
\end{table}

\begin{figure}[t]
\centering
\includegraphics[width=0.85\linewidth]{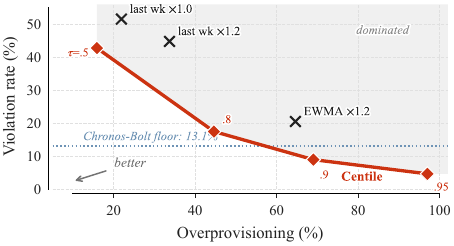}
\caption{Provisioning frontier on CESNET. Crosses denote operator rules, and the zero-shot model's grid ends at $\tau{=}0.9$.}
\vspace{-.1in}
\label{fig:pareto}
\end{figure}

\myparatight{Calibrated quantiles halve provisioning violations}
Figure~\ref{fig:pareto} replays next-hour capacity provisioning, sweeping the served quantile $\tau$ against the rules operators deploy in practice.
Notably, at an overprovisioning level comparable to the Exponentially Weighted Moving Average (EWMA) rule, \sys cuts the violation rate from $20.5\%$ to $9.0\%$, and every \sys operating point improves on the operator-rule frontier.
Table~\ref{tab:pareto} shows that \sys yields fewer violations than a zero-shot time-series foundation model at every $\tau$ the baseline serves, and only the calibrated model keeps its violation rate within the promised budget.
Furthermore, the native quantile grid of Chronos-Bolt~\cite{chronos} ends at $0.9$, so its violation rate cannot drop below $13.1\%$ at any served quantile, while the heavy-tail mixture head of \sys serves a calibrated $\tau{=}0.95$ and reaches $4.7\%$.
\begin{table}[t]
\centering
\caption{Forecast quality on CESNET against size-matched sequence models trained identically on the same streams, where GRU is a gated recurrent unit and TGR a gated-retention model (smaller values indicate better performance).}
\label{tab:dist}
\setlength{\tabcolsep}{2.6pt}
\resizebox{\linewidth}{!}{%
\begin{tabular}{lccccccccc}
\toprule
& \multicolumn{3}{c}{MASE} & \multicolumn{3}{c}{Pinball@0.99} & \multicolumn{3}{c}{CRPS} \\
\cmidrule(lr){2-4}\cmidrule(lr){5-7}\cmidrule(lr){8-10}
Model & $h{=}1$ & $h{=}4$ & $h{=}24$ & $h{=}1$ & $h{=}4$ & $h{=}24$ & $h{=}1$ & $h{=}4$ & $h{=}24$ \\
\midrule
GRU  & 0.958 & 1.079 & 1.086 & 0.031 & 0.037 & \textbf{0.035} & 0.561 & 0.631 & 0.623 \\
TGR  & 1.106 & 1.278 & 1.115 & 0.037 & 0.046 & 0.043 & 0.637 & 0.717 & 0.655 \\
\textbf{\sys} & \textbf{0.928} & \textbf{0.946} & \textbf{1.034} & \textbf{0.029} & \textbf{0.033} & \textbf{0.035} & \textbf{0.540} & \textbf{0.550} & \textbf{0.587} \\
\bottomrule
\end{tabular}}
\vspace{-.1in}
\end{table}

\myparatight{The served distribution is also the sharpest}
The provisioning gain could in principle come from the operating point alone, so Table~\ref{tab:dist} compares the served distribution itself against size-matched sequence baselines.
We observe that \sys attains the best MASE and CRPS at every horizon, $0.928$ versus $0.958$ MASE against the GRU at one hour, and the best or tied pinball loss at the $0.99$ quantile.
This can be attributed to the mixture head, which places explicit mass on the rare traffic spikes that a squared-error head averages away, and the advantage widens with the horizon.
These results underscore that the same design serves a second decision in a second domain without modification.

\subsection{Ablation: Where the Gains Come From}
\label{sec:exp-abl}

\myparatight{The choice of operating point is robust}
Sweeping the served quantile from $q{=}0.5$ to $q{=}0.8$ leaves \sys below both the deployed estimates and the history baseline at every setting on both primary capacities, and at $q{=}0.5$ and $q{=}0.6$ it outperforms every estimator.
Together with the measured calibration of Fig.~\ref{fig:calib}, this robustness lets the served quantile be chosen from the decision's risk preference alone.

\myparatight{Direct decoding outperforms rollout at an order of magnitude lower cost}
With identical weights on the size-matched gated-retention model, rollout error compounds to $8.34$ MASE by $64$ steps while direct decoding holds $4.35$, the gap widening monotonically with horizon in every seed.
One direct pass serves all $25$ input-length and horizon combinations in $5.2$ seconds, versus $69.4$ seconds for one rollout combination.

\myparatight{The sequence is the signal}
Both remaining ablations use a covariate-matched three-seed protocol, so their numbers fall slightly below the five-seed results of Table~\ref{tab:backfill}.
Removing the user's event history degrades the estimator past the operator status quo ($126.9$ versus $94.8$ at $20$K nodes), while removing the current job's covariates leaves it nearly unchanged ($20.2$ versus $20.0$), so the per-user sequence carries nearly all of the decision-relevant signal.

\myparatight{The attention layers are necessary}
Under this protocol, the \sys attention layers reach a mean bounded slowdown of $20.0\pm1.4$ and $15.5\pm0.3$ at the two primary capacities.
A size-matched GRU trained identically reaches $27.4\pm4.2$ and $20.7\pm2.7$, leaving \sys roughly a quarter lower with a three- to ninefold smaller seed standard deviation.
\begin{figure}[t]
\centering
\includegraphics[width=0.85\linewidth]{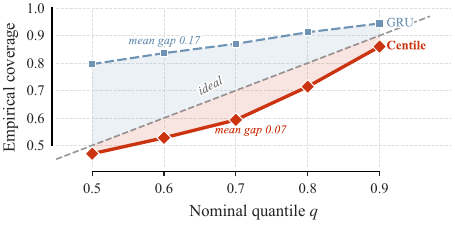}
\caption{Nominal quantile vs.\ empirical coverage of the runtime estimator against a protocol-matched GRU.}
\vspace{-.1in}
\label{fig:calib}
\end{figure}
Figure~\ref{fig:calib} explains the gap, as the nominal quantiles of \sys track their empirical coverage with a mean gap of $0.07$ versus the GRU's $0.17$.

\subsection{Deployment Considerations}
\label{sec:exp-dep}

\begin{table}[t]
\centering
\caption{Deployment profile of the \sys runtime estimator. All timing and size entries are measured.}
\label{tab:cost}
\setlength{\tabcolsep}{5pt}
\begin{tabular}{ll}
\toprule
Property & Measured value \\
\midrule
Parameters & $1.32$\,M \\
Artifact size (\texttt{bf16}) & $2.5$~MiB \\
Training, one A30 GPU & approx.\ $5$ minutes \\
Per-job CPU inference ($4$ cores) & $4.2$~ms batched, $7.7$~ms single \\
New-month deployment & zero-shot, no retraining \\
\bottomrule
\vspace{-.15in}
\end{tabular}
\end{table}

\myparatight{One artifact to operate}
Table~\ref{tab:cost} profiles the estimator, which adds negligible cost to a scheduling cycle operating at second granularity, with no GPU on the inference path.
A specialist approach maintains one model per task, per horizon, and per period, while \sys collapses this product into one $2.5$~MiB artifact per domain, three orders of magnitude below adapted billion-parameter models~\cite{netllm}.
Cold-user estimates degrade gracefully, clipping to the declared limit and affecting only $0.8\%$ of submissions with fewer than three prior observations.
Nothing in the design is specific to these two decisions, as any quantile-consuming policy inherits the same calibration and inference path.
In essence, a telemetry model's value is revealed where it is deployed, and \sys is designed to be evaluated there.

\section{Related Work}
\label{sec:related}

\myparatight{Foundation models for networking and systems} Pretrained traffic models such as ET-BERT~\cite{etbert}, netFound~\cite{netfound}, NetGPT~\cite{netgpt}, and NetMamba~\cite{netmamba} learn packet and flow representations, mainly for classification, while NetLLM~\cite{netllm} adapts a frozen language model to networking tasks and MobiGPT~\cite{mobigpt}, UoMo~\cite{uomo}, and LWM~\cite{lwm} target wireless networks.
Each remains within a single domain.

\myparatight{Time-series forecasting and foundation models} Deep forecasters are trained per dataset and per horizon, whereas Chronos~\cite{chronos}, TimesFM~\cite{timesfm}, Moirai~\cite{moirai}, Lag-Llama~\cite{lagllama}, and MOMENT~\cite{moment} pretrain once on broad corpora, and another line reprograms frozen language models into forecasters~\cite{timellm,onefitsall}.
Closest to our setting, the Tiny Time Mixer was evaluated zero-shot on CESNET traffic~\cite{ttm,ispforecast}, CLOUDADV aligns zero-shot forecasts with cloud instance sizing~\cite{cloudadv}, and decision-focused fine-tuning lowers feeder-dispatch cost~\cite{dffeeder}.
Yet, each treats inputs as uniformly-sampled series and evaluates one decision in one domain.
Because estimate accuracy governs backfilling performance, predictors draw on system-generated recent-history predictions~\cite{tsafrir}, machine learning~\cite{gaussier,klusacek}, uncertainty-aware runtime prediction~\cite{uarp}, and learned backfilling policies~\cite{rlbackfill}, all retrained per system and exposing a point estimate.

\myparatight{Learning over systems telemetry} Machine learning also drives operations tasks, from workload forecasting~\cite{azure,m100} and autoscaling~\cite{autopilot} to telemetry imputation~\cite{zoom2net}, traffic engineering~\cite{dote}, and system log representation~\cite{hlogformer}, but one task-specific model at a time.
Network digital twins evaluate control policies on a mirrored network before deployment~\cite{opentwin}, drive data-driven optimization of reliable edge caching~\cite{dtcache}, and revise the twin's ingested state as telemetry changes~\cite{untwinicdcs}, with their creation and optimization surveyed in~\cite{dnt}.
In situ video-streaming experiments reached the same conclusion, as proxy-metric gains failed to carry to the deployed decision~\cite{puffer}.
\sys instead pretrains one generative model on systems telemetry, replacing the per-task fleet with one design evaluated at the decisions it serves.

\section{Conclusion}

In this paper, we introduced \sys, a generative foundation model for network and systems telemetry evaluated at the level of the decisions it drives.
\sys treats telemetry as irregularly timed entity streams and serves every horizon as calibrated quantiles in a single pass from one pretrained model per domain.
Extensive experiments on Fugaku job logs and national ISP traffic confirm that this design lowers the mean bounded slowdown of backfilling by up to approximately $77\%$ over deployed user estimates and roughly halves the deployed rule's violation rate.
The $2.5$~MiB runtime estimator transfers zero-shot across months and serves in milliseconds on commodity CPUs.

\section{Use of AI Disclosure} AI tools~\cite{anthropic_claude} supported code development and manuscript editing, and the authors reviewed all generated content for accuracy.


\bibliographystyle{IEEEtran}
\bibliography{ref}

\end{document}